\newcommand{\NN}{\mathbb N}
\newcommand{\ZZ}{\mathbb Z}
\newcommand{\frS}{\mathfrak S}
\newcommand{\vv}{\mathbf}
\newcommand{\LC}{L^\text{(C)}}
\newcommand{\LT}{L^\text{(T)}}
\newcommand{\WC}{W^\text{(C)}}
\newcommand{\WT}{W^\text{(T)}}
\title{On Asymptotic Gate Complexity and Depth of Reversible Circuits Without Additional Memory}
\author{{
    {\small
        \begin{tabular}[t]{c}
        \normalsize Dmitry V. Zakablukov\medskip\\
        Dep. of Information Security, Bauman Moscow State Technical University,\\
        Moscow, Russian Federation\smallskip\\
        E-mail: \texttt{\href{mailto:dmitriy.zakablukov@gmail.com}{dmitriy.zakablukov@gmail.com}}
        \end{tabular}
    }
    }
}
\theoremstyle{plain}
\newtheorem{theorem}{Theorem}
\theoremstyle{definition}
\newtheorem{definition}{Definition}
\begin{document}

\maketitle

\begin{abstract}
Reversible computation is one of the most promising emerging technologies of the future.
The usage of reversible circuits in computing devices can lead to a significantly lower power consumption.
In this paper we study reversible logic circuits consisting of NOT, CNOT and 2-CNOT gates.
We introduce a set $F(n,q)$ of all transformations $\ZZ_2^n \to \ZZ_2^n$ that can be implemented by reversible circuits
with $(n+q)$ inputs.
We define the Shannon gate complexity function $L(n,q)$ and the depth function $D(n,q)$ as functions of $n$ and the number
of additional inputs $q$.
First, we prove general lower bounds for functions $L(n,q)$ and $D(n,q)$.
Second, we introduce a new group theory based synthesis algorithm, which can produce a circuit $\frS$ without additional inputs
and with the gate complexity $L(\frS) \leq 3n 2^{n+4}(1+o(1)) \mathop / \log_2 n$.
Using these bounds, we state that almost every reversible circuit with no additional inputs,
consisting of NOT, CNOT and 2-CNOT gates, implements a transformation from $F(n,0)$ with the gate complexity
$L(n,0) \asymp n 2^n \mathop / \log_2 n$ and with the depth $D(n,0) \geq 2^n(1-o(1)) \mathop / (3\log_2 n)$.
\end{abstract}

\textbf{Keywords}: reversible logic, gate complexity, circuit depth, asymptotic bounds.

%%%%%%%%%%%%%%%%%%%%%%%%%%%%%%%%%%%%%%%%%%%%%%%%%%%%%%%%%%%%%%%%%%%
\section{Introduction}

Reversible logic is essential in quantum computing, but it also has a great potential in designing
various computing devices with low power consumption. Landauer proved~\cite{landauer} that irreversible
computations lead to energy dissipation regardless of the underlying technology.
Moreover, Bennett showed~\cite{bennett} that zero-level of energy loss can be achieved only when a circuit
is completely built from reversible gates.
The main problem is that reversible circuits with fewer number of gates (\textit{gate complexity})
and input count are more practical to use. Unfortunately, strict asymptotic bounds
for the gate complexity of reversible circuits haven't been found so far.

Circuit complexity theory goes back to the work of Shannon~\cite{shannon}. He suggested considering
a complexity of the minimal switching circuit implementing some Boolean function as a measure of complexity of this function.
For today, the asymptotic gate complexity $L(n) \sim 2^n \mathop / n$ of a Boolean function of $n$ variables in a basis
of classical gates ``NOT, OR, AND'' is well-known.

Reversible computations were discussed by Toffoli in 1980~\cite{toffoli}. He described the first reversible gate,
2-CNOT (controlled controlled NOT). After that various reversible gates (CNOT~\cite{feynman}, Fredkin, etc.)
were introduced. The subject of this paper is reversible logic circuits consisting of NOT, CNOT and 2-CNOT gates.
A formal definition of these gates from~\cite{my_fast_synthesis_algorithm} will be used.
It is well known that any even permutation $h \in A(\ZZ_2^n)$ can be implemented in a circuit with $n$ inputs,
consisting of NOT, CNOT and 2-CNOT gates~\cite{shende}. Hence, the gate complexity or the depth of this circuit
can be considered as a measure of the permutation $h$ complexity.

In this paper we describe a set $F(n,q)$ of all transformations $\ZZ_2^n \to \ZZ_2^n$ that can be implemented
by reversible circuits with $(n+q)$ inputs.
We estimate the gate complexity and the depth of reversible circuit, implementing some transformation $f \in F(n,q)$ with $q$
additional inputs (also referred to as an additional memory).
For this purpose we define the Shannon gate complexity function $L(n,q)$ and the depth function $D(n,q)$
as functions of $n$ and the number of additional inputs $q$.

Using the counting argument, we prove general lower bounds for the functions $L(n,q)$ and $D(n,q)$:
\begin{gather*}
    L(n,q) \geq \frac{n2^n}{3\log_2(n+q)}(1 - o(1)) \;  , \\
    D(n,q) \geq \frac{n2^n}{3(n+q)\log_2(n+q)}(1 - o(1)) \;  .
\end{gather*}

After that we introduce a new group theory based synthesis algorithm,
which can produce a circuit $\frS$ without additional inputs and with the gate complexity
$L(\frS) \leq 3n 2^{n+4}(1+o(1)) \mathop / \log_2 n$ and the depth $D(\frS) \leq n2^{n+5}(1+o(1)) \mathop / \log_2 n$.
Finally, using these lower and upper bounds, we formulate the main statement of this paper:
almost every reversible circuit with no additional inputs, consisting of NOT, CNOT and 2-CNOT gates,
implements a transformation from $F(n,0)$ with the gate complexity
$L(n,0) \asymp n 2^n \mathop / \log_2 n$ and with the depth $D(n,0) \geq 2^n(1-o(1)) \mathop / (3\log_2 n)$.

%%%%%%%%%%%%%%%%%%%%%%%%%%%%%%%%%%%%%%%%%%%%%%%%%%%%%%%%%%%%%%%%%%%
\section{Background}

The concept of reversible gates was discussed by Toffoli in 1980~\cite{toffoli}.
Gates NOT and $k$-CNOT and the synthesis of circuits consisting of these gates
were discussed, for example, in~\cite{my_fast_synthesis_algorithm}.
We will use the following formal definitions of NOT and $k$-CNOT gates.
\begin{definition}\label{formula_not_definition}
    Gate $N_j^n$ is a NOT gate with $n$ inputs, which defines the transformation $f_j\colon \ZZ_2^n \to \ZZ_2^n$ as follows:
    $$
        f_j(\langle x_1, \cdots, x_j, \cdots, x_n \rangle) =
            \langle x_1, \cdots, x_j \oplus 1, \cdots, x_n \rangle \;  .
    $$
\end{definition}
\begin{definition}\label{formula_k_cnot_definition}
    Gate $C_{i_1,\cdots,i_k;j}^n = C_{I;j}^n$, $j \notin I$, is a generalized Toffoli gate ($k$-CNOT) with $n$ inputs,
    $k$ control inputs, which defines the transformation $f_{I;j}\colon \ZZ_2^n \to \ZZ_2^n$ as follows:
    $$
        f_{I;j}(\langle x_1, \cdots, x_j, \cdots, x_n \rangle) =
            \langle x_1, \cdots, x_j \oplus x_{i_1} \wedge \cdots \wedge x_{i_k}, \cdots, x_n \rangle \;  .
    $$
\end{definition}
We will omit an upper index in $N_j^n$ and $C_{i_1,\cdots,i_k;j}^n$, if the value of $n$ is clear from the context.
Also we will refer to $N_j$ and $C_{i_1,\cdots,i_k;j}$ as $TOF(j)$ and $TOF(i_1,\cdots,i_k;j)=TOF(I;j)$, respectively.
It is obvious that in this case the equality $TOF(j) = TOF(\emptyset; j)$ holds.

Let's denote a set of all NOT, CNOT (Feynman) and 2-CNOT (Toffoli) gates with $n$ inputs as $\Omega_n^2$.

A circuit of gates is usually defined as an acyclic oriented graph with marked edges and vertices.
In case of reversible circuits of gates from $\Omega_n^2$, fan-in, fan-out and random connection of inputs and outputs
of gates are forbidden. In an oriented graph describing a reversible circuit $\frS$, all the vertices corresponding to gates
have exactly $n$ numbered inputs and outputs. These vertices are numbered from $1$ to $l$ and $i$-th output of $m$-th vertex,
$m < l$, is connected only to an $i$-th input of $(m+1)$-th vertex.
The circuit inputs are the inputs of the first vertex and the circuit outputs are the outputs of the $l$-th vertex.
We will also call such a connection of gates as \textit{composition}.

For every vertex in the graph, $i$-th input and output are assigned to a symbol $r_i$ from some set
$R = \{\,r_1, \cdots, r_n\,\}$.
All symbols $r_i$ can be treated as memory registers names (memory cells indices), storing the current computation result
of the circuit. From definitions~\eqref{formula_not_definition} and~\eqref{formula_k_cnot_definition} it follows that
the value of only one memory register can be inverted at a time. This makes an essential difference
between reversible circuits and irreversible ones.

Among all the properties of a reversible circuit the most important ones for us are the gate complexity and the depth.
Let a reversible circuit $\frS$ with $n$ inputs be a composition of $l$ gates from $\Omega_n^2$:
$\frS = \mathop{*}_{j=1}^l {TOF(I_j; t_j)}$, where $t_j$ and $I_j$ are the controlled output and the set of control inputs
of $j$-th gate respectively.
\begin{definition}
    The gate complexity $L(\frS)$ of the reversible circuit $\frS = \mathop{*}_{j=1}^l {TOF(I_j; t_j)}$ is the number of 
    gates $l$.
\end{definition}

Classically a circuit's depth is defined as the length of the longest path from an input to an output vertex of the graph,
associated with this circuit. In our model of a reversible circuit, the associated graph presents itself a single chain,
so if we use a classical definition of a circuit's depth, we will get it equal to the circuit's gate complexity.
But it is clear that in reality it is not the case. To keep our reversible circuit's model, we introduce an alternative,
but equivalent definition of a reversible circuit's depth.
\begin{definition}
    Reversible circuit $\frS = \mathop{*}_{j=1}^l {TOF(I_j; t_j)}$ has depth $D(\frS) = 1$, if for every two of its gates
    $TOF(I_1; j_1)$ and $TOF(I_2; j_2)$ the following equation holds:
    $$
        \left( \{\,t_1\,\} \cup I_1 \right) \cap \left( \{\,t_2\,\} \cup I_2 \right) = \emptyset \;  .
    $$
\end{definition}
\begin{definition}
    Reversible circuit $\frS$ has depth $D(\frS) \leq d$,
    if it can be divided into $d$ disjoint sub-circuits with the depth of each equal to 1:
    \begin{equation}
        \frS = \bigsqcup_{i=1}^d{{\frS'}_i}, \text{ } {\frS'}_i \subseteq \frS, \text{ }D({\frS'}_i) = 1 \;  .
        \label{formula_decomposition_of_scheme_for_depth}
    \end{equation}
\end{definition}

Now we can rigorously define a reversible circuit's depth.
\begin{definition}\label{define_circuit_depth}
    The depth $D(\frS)$ of a reversible circuit $\frS$ is the minimal number of disjoint sub-circuits with the depth of each
    equal to 1 from the equation~\eqref{formula_decomposition_of_scheme_for_depth}.
\end{definition}

From the Definition~\ref{define_circuit_depth} we can derive a simple equation for the depth function in case
of a reversible circuit $\frS$ with $n$ inputs:
\begin{equation}\label{formula_depth_and_gate_complexity_dependency}
    L(\frS) \mathop / n \leq D(\frS) \leq L(\frS) \;  .    
\end{equation}

For example, let's consider a reversible circuit $\frS = C_{1;2} * C_{3;1} * N_2 * N_4 * C_{1,4;2} * N_3$
(see Fig.~\ref{pic_scheme_example}).
The circuit has six gates, so its gate complexity is $L(\frS) = 6$. Also, we can divide the circuit into 3 disjoint
sub-circuits with the depth of each equal to 1:
$\frS = (C_{1;2}) * (C_{3;1} * N_2 * N_4) * (C_{1,4;2} * N_3)$. So the circuit's depth is $D(\frS) = 3$.

\begin{figure}[ht]
    \centering
    \includegraphics{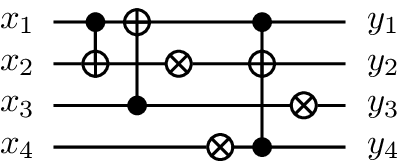}
    \caption
    {
        A reversible circuit $\frS = C_{1;2} * C_{3;1} * N_2 * N_4 * C_{1,4;2} * N_3$
        with the gate complexity $L(\frS) = 6$ and the depth $D(\frS) = 3$.
    }\label{pic_scheme_example}
\end{figure}

From Fig.~\ref{pic_scheme_example} one can note that our reversible circuit is equivalent to another one
with the depth equal to 3:
$\frS_1 = (C_{1;2} * N_4) * (C_{3;1} * N_2) * (C_{1,4;2} * N_3)$. Therefore from here on we will consider,
that such circuits $\frS$ and $\frS_1$ are different in terms of our reversible circuit's model,
but equivalent in terms of the equality of Boolean transformations, defined by them.

%%%%%%%%%%%%%%%%%%%%%%%%%%%%%%%%%%%%%%%%%%%%%%%%%%%%%%%%%%%%%%%%%%%
\section{Shannon gate complexity, depth and quantum weight functions}

It was proved that a reversible circuit with $n \geq 4$ inputs defines an even permutation on the set $\ZZ_2^n$ \cite{shende}.
In the same time, it can implement a transformation $\ZZ_2^m \to \ZZ_2^k$, where $m, k \leq n$,
with or without additional inputs. We need the following functions to explain this:
\begin{itemize}
    \item
        \textit{expanding} function $\phi_{n,n+k}\colon \ZZ_2^n \to \ZZ_2^{n+k}$ defined as
        $$
            \phi_{n,n+k}( \langle x_1, \cdots, x_n \rangle ) = \langle x_1, \cdots, x_n, 0, \cdots, 0 \rangle \;  .
        $$

    \item
        \textit{reducing} function $\psi_{n+k,n}^\pi\colon \ZZ_2^{n+k} \to \ZZ_2^n$ defined as
        $$
            \psi_{n+k,n}^\pi( \langle x_1, \cdots, x_{n+k} \rangle ) = \langle x_{\pi(1)}, \cdots, x_{\pi(n)} \rangle \;  ,
        $$
        where $\pi$ is a permutation on the set $\ZZ_{n+k}$.
\end{itemize}

Let us now define a reversible circuit implementing a transformation (see Fig.~\ref{pic_function_realization_with_memory}).
\begin{figure}[ht]
    \centering
    \includegraphics{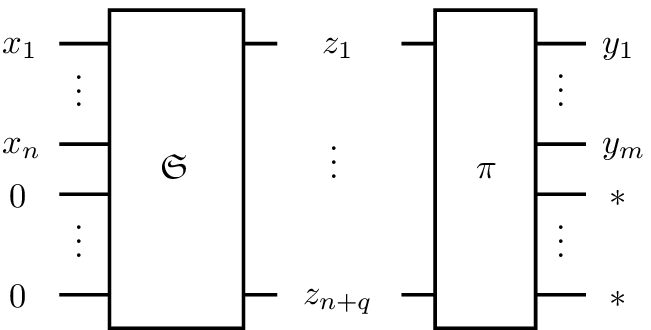}
    \caption
    {
        A reversible circuit $\frS$ implementing a transformation $f\colon \ZZ_2^n \to \ZZ_2^m$
        with $q$ additional inputs. For every $\vv x \in \ZZ_2^n$ the equation
        $f(\langle x_1, \cdots, x_n\rangle) = \langle y_1, \cdots, y_m\rangle$ holds.
    }\label{pic_function_realization_with_memory}
\end{figure}

\begin{definition}
    A reversible circuit $\frS_g$ with $(n+q)$ inputs, defining a transformation
    $g\colon \ZZ_2^{n+q} \to \ZZ_2^{n+q}$,
    implements a transformation $f\colon \ZZ_2^n \to \ZZ_2^n$ using $q \geq 0$ additional inputs
    (additional memory), if there is such a permutation $\pi \in S(\ZZ_{n+q})$ that for every $\vv x \in \ZZ_2^n$
    the following equation holds:
    $$
        \psi_{n+q,n}^\pi(g( \phi_{n,n+q}(\vv x))) = f(\vv x) \;  .
    $$
\end{definition}

Note that in this terminology expressions ``implements a transformation'' and ``defines a transformation'' have
different meanings: if a reversible circuit $\frS_g$ defines a transformation $f$, then $g(\vv x) = f(\vv x)$ for all $\vv x$.
If a circuit $\frS_g$ implements a transformation $f\colon \ZZ_2^n \to \ZZ_2^n$ and has exactly $n$ inputs,
we will say that this circuit implements $f$ \textit{without additional inputs}.

Let $P_2(n,n)$ be the set of all transformations $\ZZ_2^n \to \ZZ_2^n$.
Let $F(n,q) \subseteq P_2(n,n)$ be the set of all transformations that can be implemented by reversible circuits with
$(n+q)$ inputs.
The set of permutations, corresponding to all the gates from $\Omega_n^2$, generates the alternating group $A(\ZZ_2^n)$
and the symmetric group $S(\ZZ_2^n)$ for $n > 3$ and $n \leq 3$ respectively~\cite{shende}.
This implies that $F(n,0)$ is equal to the set of transformations that are defined by all the permutations from $A(\ZZ_2^n)$
and from $S(\ZZ_2^n)$ for $n > 3$ and $n \leq 3$ respectively.
On the other hand, it is not difficult to show that for $q \geq n$ the equality $F(n,q) = P_2(n,n)$ holds.

Let's consider a transformation $f \in F(n,q)$. Among all reversible circuits, consisting of gates from $\Omega_{n+q}^2$
and implementing the transformation $f$ with $q$ additional inputs, we can find a circuit $\frS_l$
with the minimum gate complexity and a circuit $\frS_d$ with the minimum depth.
Let $L(f,q) = L(\frS_l)$ and $D(f,q) = D(\frS_d)$.
Now we can define the Shannon gate complexity function $L(n,q)$ and the depth function $D(n,q)$ as follows:
\begin{gather}
    L(n,q) = \max_{f \in F(n,q)} {L(f,q)} \;  , \\
    D(n,q) = \max_{f \in F(n,q)} {D(f,q)} \;  .
\end{gather}

If we consider all the gates from $\Omega_n^2$ regardless of an underlying technology, we can assume that they all have the same
technological cost. However, in a quantum technology, for example, a technological cost of NOT and CNOT gates is much less than
a technological cost of a Toffoli gate~\cite{barenco}.
Hence, we will assume that a gate $e$ from $\Omega_n^2$ has the weight $W(e)$ depending on the underlying technology.

We define a \textit{quantum weight} function $W(\frS)$ for a reversible circuit $\frS$
as a sum of weights of all its gates.
Note that the value of $W(\frS)$ is not equal to the technological cost of a reversible circuit $\frS$,
because they may significantly differ.
But we can state that in most cases a greater value of the function $W(\frS)$ means a greater
technological cost of a reversible circuit $\frS$.

Let's define the function $W(f,q)$ in a similar way as the functions $L(f,q)$ and $D(f,q)$.
Then we can define the Shannon quantum weight function $W(n,q)$ as follows:
\begin{equation}
    W(n,q) = \max_{f \in F(n,q)} {W(f,q)} \;  .
\end{equation}

Let's also assume that all NOT and CNOT gates from $\Omega_n^2$ have the same weight $\WC$ and
all 2-CNOT gates from $\Omega_n^2$ have weight $\WT$.
If we denote the number of NOT and CNOT gates in a reversible circuit $\frS$ as $\LC(\frS)$
and the number of 2-CNOT gates as $\LT(\frS)$, then we can derive a simple equality for the quantum weight function
\begin{equation}
    W(\frS) = \WC \cdot \LC(\frS) + \WT \cdot \LT(\frS) \;  .
    \label{formula_quantum_weight_and_gate_complexity_dependency}
\end{equation}
Equation~\eqref{formula_quantum_weight_and_gate_complexity_dependency} means that we should count the number of 2-CNOT gates
in a reversible circuit separately from the other ones.

Many reversible logic synthesis algorithms were proposed recently~\cite{my_fast_synthesis_algorithm, iterative_compositions,
miller_spectral, miller_transform_based, saeedi_novel, maslov_rm_synthesis, saeedi_cycle_based}.
For almost every one of them an upper bound for the gate complexity of a synthesized circuit is proved.
The best known is the upper bound $L(\frS) \lesssim 5n 2^n$ for a reversible circuit $\frS$ without additional inputs,
consisting of gates from $\Omega_n^2$ \cite{maslov_rm_synthesis}.
We can consider this bound as the best upper bound for the function $L(n,0)$:
$$
    L(n,0) \lesssim 5n2^n \;  .
$$

Unfortunately, there are no known general lower bounds for the functions $L(n,q)$ and $D(n,q)$ for today.
In~\cite{shende} a lower bound $\Omega(n2^n \mathop / \log n)$ for the function $L(n,0)$ was proved.
In~\cite{maslov_thesis} a lower bound for the gate complexity of a reversible circuit without additional inputs,
consisting of gates mEXOR, was proved. However, the gate complexity of reversible circuits with additional inputs
was out of the scope.

The main result of this paper is the following theorems.
\begin{theorem}\label{theorem_complexity_lower}
    There is such $n_0 \in \NN$ that for $n > n_0$ the following equation holds:
    $$
        L(n,q) \geq \frac{2^n(n-2)}{3\log_2 (n+q)} - \frac{n}{3} \;  .
    $$
\end{theorem}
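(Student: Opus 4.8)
The plan is to run a counting (information-theoretic) lower bound. Write $N = n+q$. I will bound from above the number of transformations an $(n+q)$-input circuit of at most $l$ gates can implement, bound from below the size of $F(n,q)$, and compare the two.

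First I would count the gate alphabet: a NOT gate is fixed by its target ($N$ choices), a CNOT by an ordered target--control pair, and a $2$-CNOT by a target together with an unordered control pair, so
\[
    |\Omega_N^2| = N + N(N-1) + N\binom{N-1}{2} = \frac{N(N^2-N+2)}{2} < N^3
\]
for $N \geq 2$. Consequently the number of circuits over $\Omega_N^2$ of gate complexity at most $l$ is at most $\sum_{i=0}^{l}|\Omega_N^2|^{i} \leq 2\,|\Omega_N^2|^{l} < 2\,N^{3l}$ (the geometric estimate using $|\Omega_N^2| \geq 2$). This factor $N^{3l}$ is precisely where the constant $3$ in the denominator of the claim originates.

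Next, each $f$ implemented with $q$ additional inputs arises from a circuit's permutation $g$ composed on the left with a reducing map $\psi^{\pi}_{n+q,n}$ and on the right with the fixed expanding map $\phi_{n,n+q}$; the map $\psi^{\pi}$ depends only on the ordered choice $(\pi(1),\dots,\pi(n))$ of $n$ of the $N$ wires, so at most $N!/q! \leq N^{n}$ of them occur. Hence the number of $f \in F(n,q)$ with $L(f,q) \leq l$ is at most $N^{n}\cdot 2\,N^{3l}$, and this $N^{n}$ factor is the source of the correction term $-\,n/3$. For the lower bound on the target count I would use $F(n,0) \subseteq F(n,q)$ (an $n$-input circuit is an $(n+q)$-input circuit that never touches the extra wires, read off with $\pi = \mathrm{id}$) together with the fact already established in the text that, for $n > 3$, $F(n,0)$ is the image of the alternating group $A(\ZZ_2^n)$; thus $|F(n,q)| \geq (2^n)!/2$. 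The elementary bound $m! \geq (m/e)^m$ with $\log_2 e < 2$ gives $\log_2\!\big((2^n)!/2\big) \geq 2^n(n-\log_2 e) - 1$, which exceeds $2^n(n-2)$ with an exponential margin of roughly $(2-\log_2 e)2^n$; this produces the numerator factor $(n-2)$.

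Finally, since $L(n,q) = \max_f L(f,q)$, every $f \in F(n,q)$ is implementable with at most $L(n,q)$ gates, so the two counts must satisfy $(2^n)!/2 \leq N^{n}\cdot 2\,N^{3L(n,q)}$. Taking $\log_2$, substituting the estimates above, and solving the resulting linear inequality for $L(n,q)$ gives $L(n,q) \geq \tfrac{2^n(n-2)}{3\log_2 N} - \tfrac{n}{3} - O(1)$, and the exponential margin in the factorial estimate lets the additive $O(1)$ (coming from the geometric-sum constant and the division by $2$) be absorbed for all $n$ past a threshold $n_0$, yielding the stated bound exactly. The work here is bookkeeping rather than conceptual; the main point of difficulty is that the three constants $3$, $(n-2)$, and $n/3$ match the claim only if one uses the sharp estimates $|\Omega_N^2| < N^3$, $\log_2 e < 2$, and the reducing-map count $N^{n}$ (rather than the crude $N!$)---loosening any one of these degrades the corresponding constant in the statement.
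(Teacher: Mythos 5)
Your proposal is correct and follows essentially the same counting argument as the paper's own proof: bound the gate alphabet by roughly $(n+q)^3$ (the paper uses the slightly sharper $(n+q)^3/2$), bound the number of circuits of gate complexity at most $l$ by a geometric sum, multiply by the at most $(n+q)^n$ reducing maps, compare against $|F(n,q)| \geq (2^n)!/2$, and finish with $(2^n)! \geq (2^n/e)^{2^n}$, letting the slack $2 - \log_2 e > 0$ absorb all lower-order terms. The differences from the paper are only in bookkeeping precision and do not affect the final bound.
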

The proof of the Theorem~\ref{theorem_complexity_lower} will be given in Section~\ref{section_lower_bounds}.

\begin{theorem}\label{theorem_depth_lower}
    There is such $n_0 \in \NN$ that for $n > n_0$ the following equation holds:
    $$
        D(n,q) \geq \frac{2^n(n-2)}{3(n+q)\log_2 (n+q)} - \frac{n}{3(n+q)} \;  .
    $$
\end{theorem}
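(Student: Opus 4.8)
The plan is to reduce the depth lower bound directly to the gate-complexity lower bound of Theorem~\ref{theorem_complexity_lower}, using the relationship between depth and gate complexity. The key observation is the generalization of the inequality~\eqref{formula_depth_and_gate_complexity_dependency} to circuits with $(n+q)$ inputs: in any sub-circuit of depth $1$ the gates act on pairwise disjoint register sets $\{\,t_i\,\} \cup I_i$, and since every gate occupies at least one of the $n+q$ available registers, a depth-$1$ layer can contain at most $n+q$ gates (the extreme case being $n+q$ NOT gates on distinct registers). Consequently, for every reversible circuit $\frS$ with $n+q$ inputs we have $L(\frS) \leq (n+q) D(\frS)$.

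First I would establish that $D(f,q) \geq L(f,q) \mathop / (n+q)$ holds for every transformation $f \in F(n,q)$. The subtlety is that the two measures are defined through potentially different optimal circuits: $L(f,q) = L(\frS_l)$ for a circuit $\frS_l$ of minimum gate complexity, whereas $D(f,q) = D(\frS_d)$ for a circuit $\frS_d$ of minimum depth. To bridge them I would apply the layer bound to the minimum-depth circuit $\frS_d$, obtaining $L(\frS_d) \leq (n+q) D(\frS_d) = (n+q) D(f,q)$. Since $\frS_d$ also implements $f$, its gate complexity is at least the minimal one, so $L(\frS_d) \geq L(f,q)$. Combining these two inequalities yields $L(f,q) \leq (n+q) D(f,q)$, which is exactly the desired per-transformation bound.

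Next I would pass to the maximum over $f \in F(n,q)$. Because the factor $1 \mathop / (n+q)$ is independent of $f$, choosing $f^\ast$ to maximize $L(f,q)$ gives $D(n,q) \geq D(f^\ast,q) \geq L(f^\ast,q) \mathop / (n+q) = L(n,q) \mathop / (n+q)$, that is
$$
    D(n,q) \geq \frac{L(n,q)}{n+q} \;  .
$$
Finally, substituting the lower bound for $L(n,q)$ from Theorem~\ref{theorem_complexity_lower} and dividing through by $n+q$ produces precisely
$$
    D(n,q) \geq \frac{2^n(n-2)}{3(n+q)\log_2(n+q)} - \frac{n}{3(n+q)} \;  ,
$$
valid for the same threshold $n_0$ as in Theorem~\ref{theorem_complexity_lower}. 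I expect the main (and essentially only) obstacle to be the careful treatment of the two distinct optimal circuits in the second step: a naive attempt to relate $D(f,q)$ and $L(f,q)$ on a single circuit fails because the two quantities are minimized independently, and the argument must instead route through the minimum-depth circuit and use the sub-optimality of its gate count.
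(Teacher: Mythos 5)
Your proof is correct and takes the same route as the paper, whose entire proof reads ``Follows from the Theorem~\ref{theorem_complexity_lower} and the equation~\eqref{formula_depth_and_gate_complexity_dependency}''; you have simply made explicit the two details the paper leaves implicit, namely the generalization of~\eqref{formula_depth_and_gate_complexity_dependency} to $(n+q)$-input circuits and the bridge between the independently minimized quantities $L(f,q)$ and $D(f,q)$ via the minimum-depth circuit. Both of these fills are handled correctly, so nothing is missing.
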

\begin{proof}
    Follows from the Theorem~\ref{theorem_complexity_lower}
    and the equation~\eqref{formula_depth_and_gate_complexity_dependency}.
\end{proof}

\begin{theorem}\label{theorem_quantum_weight_lower}
    There is such $n_0 \in \NN$ that for $n > n_0$ the following equation holds:
    $$
        W(n,q) \geq \min(\WC, \WT) \cdot \left(\frac{2^n(n-2)}{3\log_2 (n+q)} - \frac{n}{3} \right) \; .
    $$
\end{theorem}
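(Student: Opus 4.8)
The plan is to reduce this weight lower bound to the already established gate complexity lower bound of Theorem~\ref{theorem_complexity_lower} by extracting the constant factor $\min(\WC, \WT)$ from every circuit. The whole argument rests on the observation that, although the number of $2$-CNOT gates is weighted differently from the number of NOT and CNOT gates, each individual gate contributes at least $\min(\WC, \WT)$ to the total weight.

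First I would establish a pointwise bound relating weight and gate complexity. For an arbitrary reversible circuit $\frS$ built from gates of $\Omega_{n+q}^2$, the equation~\eqref{formula_quantum_weight_and_gate_complexity_dependency} gives $W(\frS) = \WC \cdot \LC(\frS) + \WT \cdot \LT(\frS)$. Since $\LC(\frS)$ and $\LT(\frS)$ are non-negative and $L(\frS) = \LC(\frS) + \LT(\frS)$, replacing both weights by the smaller one yields $W(\frS) \geq \min(\WC, \WT) \cdot L(\frS)$.

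Second I would transfer this estimate to the per-transformation functions. Fix a transformation $f \in F(n,q)$ and let $\frS_w$ be a circuit of minimal weight implementing $f$, so that $W(f,q) = W(\frS_w)$. By the first step $W(\frS_w) \geq \min(\WC, \WT) \cdot L(\frS_w)$, and since $\frS_w$ is itself a circuit implementing $f$, its gate complexity cannot be smaller than the minimal one, i.e.\ $L(\frS_w) \geq L(f,q)$. Chaining these inequalities gives $W(f,q) \geq \min(\WC, \WT) \cdot L(f,q)$ for every $f$. Taking the maximum over all $f \in F(n,q)$, and noting that $\min(\WC, \WT)$ is a non-negative constant independent of $f$, I would obtain $W(n,q) \geq \min(\WC, \WT) \cdot L(n,q)$; applying Theorem~\ref{theorem_complexity_lower} then yields the claimed bound for $n > n_0$.

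I do not expect a serious obstacle in this argument; it is essentially bookkeeping. The only point that demands care is the interplay in the second step between the weight-minimizing circuit $\frS_w$ and the complexity-minimizing circuit underlying $L(f,q)$, since these need not coincide. This is resolved by the elementary inequality $L(\frS_w) \geq L(f,q)$, which holds precisely because $L(f,q)$ is a minimum over all circuits implementing $f$ and $\frS_w$ is one such circuit. Because the factor $\min(\WC, \WT)$ pulls cleanly out of the maximum over $F(n,q)$, the reduction to Theorem~\ref{theorem_complexity_lower} is immediate.
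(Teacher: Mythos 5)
Your proposal is correct and takes essentially the same route as the paper, whose proof is the one-line remark that the bound follows from Theorem~\ref{theorem_complexity_lower} together with equation~\eqref{formula_quantum_weight_and_gate_complexity_dependency}. You have merely unpacked what that remark leaves implicit: the pointwise inequality $W(\frS) \geq \min(\WC, \WT) \cdot L(\frS)$, its transfer to $W(f,q) \geq \min(\WC, \WT) \cdot L(f,q)$ despite the weight-minimizing and complexity-minimizing circuits possibly differing, and the passage to the maximum over $F(n,q)$.
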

\begin{proof}
    Follows from the Theorem~\ref{theorem_complexity_lower}
    and the equation~\eqref{formula_quantum_weight_and_gate_complexity_dependency}.
\end{proof}

\begin{theorem}\label{theorem_complexity_upper_no_memory}
    $$
        L(n, 0) \leqslant \frac{3n2^{n+4}}{\log_2 n - \log_2 \log_2 n - \log_2 \phi(n)}
            \left( 1 + \epsilon(n) \right) \;  ,
    $$
    where $\phi(n) < n \mathop / \log_2 n$ is an arbitrarily slowly growing function and $\epsilon(n)$ equals to:
    $$
        \epsilon(n) = \frac{1}{6\phi(n)} +\left(\frac{8}{3} - o(1)\right)
            \frac{\log_2 n \cdot \log_2 \log_2 n}{n} \;  .
    $$
\end{theorem}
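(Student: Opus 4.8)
The plan is to describe an explicit synthesis procedure that, for every $f \in F(n,0)$, outputs a circuit over $\Omega_n^2$ realizing $f$ within the stated bound; since $L(n,0) = \max_{f \in F(n,0)} L(f,0)$, a uniform estimate over all $f$ proves the theorem. As recalled above, for $n > 3$ the set $F(n,0)$ consists exactly of the transformations defined by the even permutations $\pi \in A(\ZZ_2^n)$, so it suffices to bound the cost of synthesizing an arbitrary such $\pi$. The target is dictated by the counting bound of Theorem~\ref{theorem_complexity_lower}: a circuit of $L$ gates from $\Omega_n^2$ encodes at most $L \cdot (3\log_2 n + O(1))$ bits, while $\log_2|A(\ZZ_2^n)| \sim n2^n$, so any near-optimal construction must make each gate carry $\Theta(\log_2 n)$ bits of useful information. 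The whole difficulty is to realize this information density in place, i.e. without the additional memory that a naive construction would use as scratch.

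The skeleton I would use is a block substitution over a coordinate split $\ZZ_2^n \cong \ZZ_2^{n-k} \times \ZZ_2^k$. Writing each point as $(a,b)$ with high part $a \in \ZZ_2^{n-k}$ and low part $b \in \ZZ_2^k$, the $2^n$ points fall into $2^{n-k}$ cosets of size $2^k$ indexed by $a$. The block size is kept free and fixed only at the end to $2^k \approx n/(\phi(n)\log_2 n)$, i.e. $k \approx \log_2 n - \log_2\log_2 n - \log_2\phi(n)$, which is precisely the denominator of the bound. I would then factor $\pi$ into a bounded number of stages by a routing (Hall/edge-colouring) argument on the block system: a stage of intra-coset permutations, which fix $a$ and apply a table of permutations $\{\sigma_a\}_{a}$, $\sigma_a \in S(\ZZ_2^k)$, to the low part, alternating with the complementary, information-rich stage that fixes $b$ and permutes the high part; the latter is reduced either by a fixed change of coordinate frame or by recursion on $n-k$ bits. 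This bounded number of stages is what produces the leading constant, and the even-permutation constraint is absorbed by staying inside $A(\ZZ_2^n)$ at the cost of an additive $O(n)$ parity correction, negligible against the main term.

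The engine, and the step I expect to be the main obstacle, is the ancilla-free realization of one block-substitution stage within total cost $O(n2^n/k)$ rather than the naive $O(n2^n)$. A naive stage would apply, coset by coset, a sequence of fully-controlled transpositions, each an $(n-1)$-controlled NOT costing $\Theta(n)$ gates, for a total of $\Theta(n2^n)$ and no saving. Instead I would build a universal in-place cascade over the low $k$ coordinates whose steps are shared across all $2^{n-k}$ cosets in the Lupanov manner, so that each elementary gate — carrying $\Theta(k) = \Theta(\log_2 n)$ bits of control — performs useful work for many cosets at once; balancing the per-cascade overhead against the number of cosets it serves is what forces the optimal $k$ above and yields the factor $1/k \sim 1/\log_2 n$. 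The subtlety, absent in the classical irreversible case, is that with $q = 0$ one cannot copy intermediate values, so the shared cascade must draw its scratch from the surrounding register as borrowed (``dirty'') workspace via group-theoretic conjugation, and one must verify that every borrowed bit is restored so that the net action is exactly the intended table $\{\sigma_a\}$. Controlling the cost of the complementary information-rich stage — either by recursion, where the depth of recursion and the per-level overhead must be shown to contribute only lower-order terms, or by an explicit reduction to the same block-substitution primitive — is the second part of this obstacle.

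Finally I would sum the gate counts over the constantly-many stages and over all cosets. The dominant contribution assembles to $3n2^{n+4}/k$ with $k = \log_2 n - \log_2\log_2 n - \log_2\phi(n)$, the leading constant $48 = 3 \cdot 2^4$ collecting from the fixed number of stages and the cascade overhead. The remaining contributions then combine into the factor $1 + \epsilon(n)$: the slack from rounding $2^k$ to an integer and from the innermost sub-block, of relative size $\sim 1/\phi(n)$, yields the term $\tfrac{1}{6\phi(n)}$, while the sub-leading terms of the Lupanov estimate and the recursion/boundary overhead, of relative order $k\log_2 k/n$ with $k \sim \log_2 n$ and $\log_2 k \sim \log_2\log_2 n$, yield the term $(\tfrac{8}{3} - o(1))\tfrac{\log_2 n\,\log_2\log_2 n}{n}$. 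Checking that all omitted lower-order sums are genuinely absorbed into these two pieces is the routine but lengthy part of the calculation.
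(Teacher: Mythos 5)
This is a research plan, not a proof: both steps that carry the entire content of the theorem are explicitly left open. You name as ``the main obstacle'' the ancilla-free realization of one block-substitution stage in $O(n2^n/k)$ gates, and as ``the second part of this obstacle'' the cost control of the information-rich stage; nothing in your text resolves either. Moreover, there are concrete reasons to doubt the mechanism you propose. First, in the Hall/Benes factorization over $\ZZ_2^{n-k}\times\ZZ_2^k$ the middle stage consists of $2^k$ \emph{distinct} permutations of $\ZZ_2^{n-k}$, each conditioned on a specific value of the low part; it carries about $(n-k)2^n \sim n2^n$ bits, i.e.\ essentially all of the information in $h$, so ``recursion on $n-k$ bits'' postpones rather than solves the problem, and every gate produced inside the recursion inherits $k$ extra control bits whose ancilla-free expansion into 2-CNOT gates multiplies the cost by $\Theta(k)$ per level --- exactly the overhead that must be avoided to reach $n2^n/\log_2 n$. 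Second, the ``Lupanov-style sharing'' of the intra-coset stage requires applying one $\sigma \in S(\ZZ_2^k)$ simultaneously to all cosets $a$ with $\sigma_a = \sigma$, i.e.\ conditioning on an arbitrary subset of $\ZZ_2^{n-k}$; with $q=0$ there is no place to compute or store the characteristic function of such a subset, and computing it is itself a problem of the same order as the one being solved. Third, the constants are not derived: you assert that the stages ``assemble to'' $3n2^{n+4}/k$ and that the residual terms ``combine into'' $\tfrac{1}{6\phi(n)}$ and $\left(\tfrac{8}{3}-o(1)\right)\tfrac{\log_2 n\,\log_2\log_2 n}{n}$, but no count is performed; as written, these numbers are read off from the statement being proved rather than obtained from the construction.

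For contrast, the paper's proof does not use block substitution at all. It splits $h \in A(\ZZ_2^n)$ into roughly $2^{n+1}/k$ groups of $K = k/2$ independent transpositions (plus a short residual), writes the $k$ non-fixed points of one group as the rows of a $k\times n$ matrix, and conjugates by NOT, CNOT and 2-CNOT gates so as to bring this matrix to a canonical form in which the whole group becomes a \emph{single} multi-controlled Toffoli gate, afterwards expanded by the Barenco et al.\ construction. Conjugation is the device that replaces your ``borrowed workspace'': the conjugating gates appear mirrored on both sides of the expensive gate, so nothing ever needs to be stored, and each group costs at most $12n + k2^{k+1} + 32k\log_2 k$ gates. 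Summing over the groups via~\eqref{formula_upper_bound_of_L_h_common} and taking $k = 2^{\lfloor\log_2 m\rfloor}$ with $m = \log_2 n - \log_2\log_2 n - \log_2\phi(n)$ gives the leading constant $48 = 4\cdot 12$ (the factor $4$ from $k \geq m/2$, the $12n$ from the linear-in-$n$ per-group cost), the term $\tfrac{1}{6\phi(n)}$ from the $k2^{k+1}$ contribution, and the term $\left(\tfrac{8}{3}-o(1)\right)\tfrac{\log_2 n\,\log_2\log_2 n}{n}$ from the $32k\log_2 k$ contribution --- none of which match the provenance you ascribe to them. If you wish to salvage your route, this conjugation trick is the missing piece that lets address information be set up and torn down for free around each expensive gate without any memory.
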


\begin{theorem}\label{theorem_depth_upper_no_memory}
    $$
        D(n, 0) \leqslant \frac{n2^{n+5}}{\log_2 n - \log_2 \log_2 n - \log_2 \phi(n)}
            \left( 1 + \epsilon(n) \right) \;  ,
    $$
    where $\phi(n) < n \mathop / \log_2 n$ is an arbitrarily slowly growing function and $\epsilon(n)$ equals to:
    $$
        \epsilon(n) = \frac{1}{4\phi(n)} +(4 - o(1))\frac{\log_2 n \cdot \log_2 \log_2 n}{n} \;  .
    $$
\end{theorem}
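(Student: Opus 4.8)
The plan is to read off the depth bound from the very circuit $\frS$ that already witnesses the gate-complexity estimate of Theorem~\ref{theorem_complexity_upper_no_memory}, since the inequality $D(\frS)\le L(\frS)$ from~\eqref{formula_depth_and_gate_complexity_dependency} by itself only yields $D(n,0)\lesssim 3n2^{n+4}/\log_2 n$, which is a factor $\tfrac32$ too large. Write $A=3n2^{n+4}/(\log_2 n-\log_2\log_2 n-\log_2\phi(n))$ for the leading part of the gate-complexity bound and $\epsilon_L(n)$ for its correction term, so that Theorem~\ref{theorem_complexity_upper_no_memory} reads $L(\frS)\le A(1+\epsilon_L(n))$. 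The target depth bound has leading coefficient $n2^{n+5}=\tfrac23\cdot 3n2^{n+4}$, and one checks directly that its correction satisfies $\epsilon(n)=\tfrac1{4\phi(n)}+(4-o(1))\log_2 n\log_2\log_2 n/n=\tfrac32\epsilon_L(n)$; hence the claimed bound is exactly $\tfrac23 A\,(1+\tfrac32\epsilon_L(n))=A(\tfrac23+\epsilon_L(n))=L(\frS)-\tfrac13 A$. So the entire content is to prove that, when $\frS$ is packed into depth-$1$ layers according to Definition~\ref{define_circuit_depth}, at least $\tfrac13 A=n2^{n+4}/(\log_2 n-\cdots)\,(1-o(1))$ gates can be absorbed into a layer already occupied by another gate, i.e. that the number of layers falls below the number of gates by this amount.

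First I would expose the block structure of $\frS$: the group-theory synthesis algorithm announced in the introduction builds $\frS$ as a composition $\frS=\compose_i\frS_i$ of blocks, each block realizing one elementary step — a conditional replacement driven by a fixed group of about $\log_2 n$ address registers and acting on the remaining data registers. Within such a block the individual gates share only the fixed control registers and otherwise hit pairwise-disjoint targets, so by Definition~\ref{define_circuit_depth} the block collapses into a bounded number of depth-$1$ layers whose sizes exceed one. Then I would count, over all blocks, the gates thereby merged with a neighbour; the key is that the same enumeration of blocks and their sizes that produces Theorem~\ref{theorem_complexity_upper_no_memory} also controls the number of merges, and matching it against the gate total should produce the factor $\tfrac23$ on the leading term together with the correspondingly inflated corrections $\tfrac32\epsilon_L(n)$.

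The hard part will be the parallelization bookkeeping: proving that a full third of the leading-order gates can be layered in parallel. In a generic circuit consecutive gates almost always share a control register and cannot be merged at all, so the whole gain rests on the fact that the construction uses each fixed control pattern to drive operations on \emph{disjoint} target registers; I must identify exactly which gates conflict, verify the disjointness of their register sets $\{\,t\,\}\cup I$, and confirm that the count of mergeable gates is $\tfrac13 A(1-o(1))$ rather than some smaller fraction. A secondary technical burden is that the asymptotic accounting of the gate-complexity proof has to be redone while tracking layers instead of gates, so that the denominator $\log_2 n-\log_2\log_2 n-\log_2\phi(n)$ is reproduced unchanged and only the leading constant and the $\epsilon$-corrections are rescaled.
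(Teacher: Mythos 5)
Your global bookkeeping matches the paper's: the paper also bounds $D(n,0)$ by re-examining the very circuit produced by the synthesis algorithm of Theorem~\ref{theorem_complexity_upper_no_memory}, and your arithmetic observation that the target amounts to replacing the $12n$-per-block leading cost by $8n$ per block (a saving of one third of the leading term, with $\epsilon(n)=\tfrac32\,\epsilon_L(n)$) is exactly what the paper's part-by-part accounting delivers: steps 3 and 5 stay sequential ($D_3=L_3$, $D_5=L_5\le 8(n-\log_2 k)$), while the $L_1+L_2+L_4\approx 4n$ gates per block collapse to $O(\log n)$ depth.

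However, the mechanism you propose for the collapse is wrong, and it is the crux of the proof. You claim that within a block the gates ``share only the fixed control registers and otherwise hit pairwise-disjoint targets'' and therefore merge into few depth-1 layers. Under Definition~\ref{define_circuit_depth}, built on the depth-1 condition $\left(\{\,t_1\,\}\cup I_1\right)\cap\left(\{\,t_2\,\}\cup I_2\right)=\emptyset$, two gates sharing even one control register can never lie in a common layer; gates driven by a fixed control pattern are precisely the ones that cannot be parallelized, so your scheme yields no saving at all. The paper's savings come from two different sources: (i) the conjugating NOT gates of steps 2 and 4 ($L_2=2d$, $L_4=2(n-\log_2 k)$) have \emph{empty} control sets and distinct targets, hence each such step contributes depth $D_2=D_4=2$; (ii) the duplicate-column-clearing CNOTs of step 1 ($L_1=2(n-d)$) are rearranged into a doubling tree (Fig.~\ref{pic_zeroing_out}): in each round the still-equal columns are paired up and cleared by CNOTs whose control-target pairs are pairwise disjoint, so each gate in a round has a \emph{different} control, and the whole step takes depth $D_1=2\lceil\log_2(n-d)\rceil$. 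Without replacing your ``shared fixed control'' idea by these two constructions, the claimed $\tfrac13$-saving cannot be established, since the dominant terms $L_1$ and $L_4$ (together $\approx 4n$ per block) are exactly the ones your argument fails to compress.
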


\begin{theorem}\label{theorem_quantum_weight_upper_no_memory}
    $$
        W(n, 0) \leqslant \frac{n2^{n+4} \left( \WC(1 + \epsilon_C(n)) + 2\WT(1 + \epsilon_T(n) \right)}
            {\log_2 n - \log_2 \log_2 n - \log_2 \phi(n)} \;  ,
    $$
    where $\phi(n) < n \mathop / \log_2 n$ is an arbitrarily slowly growing function and:
    \begin{align*}
        \epsilon_C(n) &= \frac{1}{2\phi(n)} - \left( \frac{1}{2} - o(1) \right) \cdot \frac{\log_2 \log_2 n }{n} \;  , \\
        \epsilon_T(n) &= (4 - o(1))\frac{\log_2 n \cdot \log_2 \log_2 n}{n} \;  .
    \end{align*}
\end{theorem}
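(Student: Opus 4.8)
The plan is to revisit the reversible circuit $\frS$ with no additional inputs produced by the group-theory based synthesis algorithm used to establish Theorem~\ref{theorem_complexity_upper_no_memory}, and to refine its accounting. Rather than bounding only the total gate count $L(\frS) = \LC(\frS) + \LT(\frS)$, I would bound separately the number $\LC(\frS)$ of NOT and CNOT gates and the number $\LT(\frS)$ of 2-CNOT gates. The quantum weight then follows at once from equation~\eqref{formula_quantum_weight_and_gate_complexity_dependency}, $W(\frS) = \WC \cdot \LC(\frS) + \WT \cdot \LT(\frS)$: taking the maximum over $f \in F(n,0)$ of the resulting weight gives the stated bound on $W(n,0)$. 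A crude estimate $W(\frS) \leq \max(\WC,\WT) \cdot L(\frS)$ would not suffice, since the two gate types carry different weights and different error terms $\epsilon_C(n)$, $\epsilon_T(n)$; the genuine content is the separation of the count.

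Concretely, I would show that the same construction yields the two sub-bounds
\begin{align*}
    \LC(\frS) &\leq \frac{n2^{n+4}}{\log_2 n - \log_2 \log_2 n - \log_2 \phi(n)}\left(1 + \epsilon_C(n)\right) \; , \\
    \LT(\frS) &\leq \frac{2\,n2^{n+4}}{\log_2 n - \log_2 \log_2 n - \log_2 \phi(n)}\left(1 + \epsilon_T(n)\right) \; ,
\end{align*}
with $\epsilon_C(n)$ and $\epsilon_T(n)$ exactly as stated, and then substitute into the weight formula. A useful consistency check for this bookkeeping is that adding the two displays must reproduce Theorem~\ref{theorem_complexity_upper_no_memory}: the leading coefficients sum to $1 + 2 = 3$, matching the factor $3n2^{n+4}$, and the combined relative error satisfies $\epsilon(n) = \tfrac{1}{3}\left(\epsilon_C(n) + 2\epsilon_T(n)\right)$ up to the stated $o(1)$ terms. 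Indeed the $\tfrac{1}{6\phi(n)}$ term of $\epsilon(n)$ comes as $\tfrac{1}{3}\cdot\tfrac{1}{2\phi(n)}$ from the NOT/CNOT count, the $\left(\tfrac{8}{3}-o(1)\right)\log_2 n \cdot \log_2\log_2 n / n$ term comes as $\tfrac{2}{3}\cdot(4-o(1))\log_2 n \cdot \log_2\log_2 n / n$ from the 2-CNOT count, and the negative lower-order $\log_2\log_2 n / n$ contribution of $\epsilon_C$ is absorbed into the $o(1)$.

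To obtain the two separate bounds I would trace the gate types through each phase of the synthesis algorithm rather than reusing the collapsed single-$\epsilon$ estimate. The dominant cost is incurred by implementing the large block transformations, and the factor-of-two predominance of 2-CNOT gates in the leading term reflects that each elementary step of that block construction spends, to leading order, roughly one NOT/CNOT gate per two 2-CNOT gates. The $\tfrac{1}{2\phi(n)}$ main correction attaches to the NOT/CNOT count (governed by the block-size parameter tied to $\phi$), while the $\left(4-o(1)\right)\log_2 n \cdot \log_2\log_2 n / n$ correction attaches to the 2-CNOT count (governed by the number of control lines, hence carrying the extra $\log_2 n$ factor).

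The hard part will be confirming that the $2\!:\!1$ ratio of 2-CNOT to NOT/CNOT gates holds precisely in the leading term across all phases of the algorithm, and that the lower-order corrections split as claimed — in particular that the NOT/CNOT count carries a genuinely negative lower-order term of size $\left(\tfrac{1}{2}-o(1)\right)\log_2\log_2 n / n$ while the $\log_2 n \cdot \log_2\log_2 n / n$ error falls entirely on the 2-CNOT count. This requires re-deriving the per-phase gate tallies from the structure of the construction, since the single aggregate error $\epsilon(n)$ of Theorem~\ref{theorem_complexity_upper_no_memory} does not by itself determine the individual errors $\epsilon_C(n)$ and $\epsilon_T(n)$.
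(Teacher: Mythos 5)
Your proposal is correct and takes essentially the same route as the paper: the paper's proof likewise re-traces the synthesis algorithm of Theorem~\ref{theorem_complexity_upper_no_memory} phase by phase, tallies $\LC$ and $\LT$ separately, arrives at exactly your two sub-bounds (its $\LT(f_h,0) \leq n2^{n+5}\left(1+\epsilon_T(n)\right) \mathop / \left(\log_2 n - \log_2\log_2 n - \log_2\phi(n)\right)$ is your second display, since $2 \cdot n2^{n+4} = n2^{n+5}$), and substitutes into equation~\eqref{formula_quantum_weight_and_gate_complexity_dependency}. One small imprecision in your heuristic, which does not affect the plan: the leading $2\!:\!1$ predominance of 2-CNOT gates is not a per-step mix within the block construction but comes from the Barenco-style replacement of the single $(n-\log_2 k)$-control Toffoli gate ($\LT_5 \leq 8(n-\log_2 k)$, against the roughly $4n$ NOT/CNOT gates of the column-zeroing, row-zeroing and conjugation phases), with the row-canonicalization Toffolis contributing only the $(4-o(1))\log_2 n \cdot \log_2\log_2 n \mathop / n$ correction in $\epsilon_T(n)$.
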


Proofs of the Theorems~\ref{theorem_complexity_upper_no_memory}--\ref{theorem_quantum_weight_upper_no_memory}
will be given in Section~\ref{section_upper_bound_no_memory}.

\begin{theorem}\label{theorem_complexity_main}
    $$
        L(n,0) \asymp n2^n \mathop / \log_2 n \;  .
    $$
\end{theorem}
\begin{proof}
    Follows from the Theorems~\ref{theorem_complexity_lower} and~\ref{theorem_complexity_upper_no_memory}.
\end{proof}

%%%%%%%%%%%%%%%%%%%%%%%%%%%%%%%%%%%%%%%%%%%%%%%%%%%%%%%%%%%%%%%%%%%
\section{General lower bounds}\label{section_lower_bounds}

As we said earlier, we can implement any permutation $h \in A(\ZZ_2^n)$ with a reversible circuit without additional inputs,
consisting of gates from $\Omega_n^2$. In paper~\cite{gluhov} it was proved that the length $L(G,M)$ of a permutation group $G$
with respect to a generating set $M$ has the following lower bound:
$$
    L(G,M) \geqslant \left \lceil \log_{|M|} |G| \right \rceil \;  .
$$
In our case we have $G = A(\ZZ_2^n)$, $|G| = (2^n)! \mathop / 2$, $|M| = |\Omega_n^2|$.
Since the cardinality of the set $\Omega_n^2$ equals to
\begin{equation}\label{formula_size_of_set_omega_n_2}
    |\Omega_n^2| = \sum_{k = 0}^2 {(n-k)  {n \choose k}} = \frac{n^3}{2} \left( 1 + o(1) \right)  \; ,
\end{equation}
we can derive a simple lower asymptotic bound for the function $L(n,0)$:
$$
    L(n,0) \gtrsim \frac{\log_2 ((2^n)! \mathop / 2)}{\log_2 (n^3 \mathop / 2)} \gtrsim \frac{n2^n}{3 \log_2 n}  \;  .  
$$
This bound is asymptotically equal to the bound $\Omega(n2^n \mathop / \log n)$ from the paper~\cite{shende}.

To derive a general lower bound for the function $L(n, q)$, we should take into account
all transformations $\ZZ_2^n \to \ZZ_2^n$ that can be implemented by a reversible circuit with $(n+q)$ inputs.
There are no more than $P(n+q,n)$ (an $n$-permutation of $(n+q)$) of such transformations.

Let's now proceed to the proof of the Theorem~\ref{theorem_complexity_lower}.
\begin{proof}[Proof of the Theorem~\ref{theorem_complexity_lower}]
    We use counting argument to prove that there is such $n_0 \in \NN$ that for $n > n_0$ the following equation holds:
    $$
        L(n,q) \geq \frac{2^n(n-2)}{3\log_2(n+q)} - \frac{n}{3} \;  .
    $$

    Let $r = |\Omega_n^2|$. From the equation~\eqref{formula_size_of_set_omega_n_2} it follows that
    \begin{gather*}
        r = \sum_{k=0}^2{(n-k)\binom{n}{k}} = \frac{n^3 - n^2 + 2n}{2} \;  , \notag \\
        \frac{n^2(n-1)}{2} + 1 < r \leq \frac{n^3}{2} \text{ \,, if\, } n \geq 2 \;  .
    \end{gather*}

    Let $\EuScript C^*(n,s)$ and $\EuScript C(n,s)$ be the number of all reversible circuits of gates
    from $\Omega_n^2$ with the gate complexity $s$ and no more than $s$ respectively.
    Then the following equations hold:
    \begin{align*}
        \EuScript C(n,s) &= \sum_{i=0}^s {\EuScript C^*(n,i)} = \frac{r^{s+1} - 1}{r-1}
            \leqslant \left( \frac{n^3}{2} \right)^{s+1} \cdot \frac{2}{n^2(n-1)}  \; , \\
        \EuScript C(n,s) &\leqslant \left( \frac{n^3}{2} \right)^s \cdot \left(1 + \frac{1}{n-1}\right)
            \text{ \,, if\, } n \geqslant 2 \;  .
    \end{align*}

    As we said earlier, there are no more than $P(n+q,n)$ of different transformations $\ZZ_2^n \to \ZZ_2^n$,
    that can be implemented by a reversible circuit with $(n+q)$ inputs.
    Hence, we can state that
    $$
       \EuScript C(n+q,L(n,q)) \cdot P(n+q,n) \geq |F(n,q)| \;  .
    $$
    Since $|F(n,q)| \geqslant |A(\ZZ_2^n)| = (2^n)! \mathop / 2$ and $P(n+q,n) \leq (n+q)^n$, it follows that
    $$
        \left( \frac{(n+q)^3}{2} \right)^{L(n,q)} \cdot \left(1 + \frac{1}{n+q-1}\right)
            \cdot (n+q)^n \geq (2^n)! \mathop / 2 \;  .
    $$

    There is such $n_0 \in \NN$ that for $n > n_0$ an equation $(2^n)! \geq (2^n \mathop / e)^{2^n}$ holds.
    For such values of $n$ we can state that
    \begin{multline*}
        L(n,q) \cdot (3\log_2(n+q) - 1) + \log_2 \left(1 + \frac{1}{n+q-1}\right) + \\
            + n \log_2(n+q) \geq 2^n(n - \log_2 e) \;  .
    \end{multline*}
    
    From this we obtain a general lower bound for the function $L(n,q)$:
    $$
        L(n,q) \geq \frac{2^n(n-2)}{3\log_2(n+q)} - \frac{n}{3} \;  .
    $$
\end{proof}

In the following section we will give a description of a new group theory based synthesis algorithm,
which can produce a reversible circuit with asymptotically the best gate complexity
and without additional inputs.

%%%%%%%%%%%%%%%%%%%%%%%%%%%%%%%%%%%%%%%%%%%%%%%%%%%%%%%%%%%%%%%%%%%
\section{Synthesis of circuits without additional inputs}\label{section_upper_bound_no_memory}

A reversible circuit without additional inputs, consisting of gates from $\Omega_n^2$, can implement only an even permutation.
In~\cite{my_fast_synthesis_algorithm} a group theory based synthesis algorithm was described.
This algorithm for any permutation $h \in A(\ZZ_2^n)$ can
produce a circuit $\frS$ implementing $h$ with the gate complexity $L(\frS) \lesssim 7n 2^n$.

Let us now describe a new synthesis algorithm which use a similar technique as the algorithm from%
~\cite{my_fast_synthesis_algorithm}, but has a better upper bound for the gate complexity of a synthesized circuit.
This algorithm's description will be given in a form of the theorem proof.

\begin{proof}[Proof of the Theorem~\ref{theorem_complexity_upper_no_memory}]
    We will describe a new group theory based synthesis algorithm,
    which for any permutation $h \in A(\ZZ_2^n)$ can produce a circuit $\frS$ implementing $h$ with the gate complexity
    $$
        L(\frS) \leq \frac{3n2^{n+4}}{\log_2 n - \log_2 \log_2 n - \log_2 \phi(n)}
            \left( 1 + \epsilon(n) \right)  \; ,
    $$
    where $\phi(n) < n \mathop / \log_2 n$ is an arbitrarily slowly growing function and the function $\epsilon(n)$ equals to
    $$
        \epsilon(n) = \frac{1}{6\phi(n)} +\left(\frac{8}{3} - o(1)\right)
            \frac{\log_2 n \cdot \log_2 \log_2 n}{n}
    $$

    Let's consider a permutation $h \in A(\ZZ_2^n)$ and the transformation $f_h\colon \ZZ_2^n \to \ZZ_2^n$, defined by it.
    The main idea is in a decomposition of $h$ into the product
    of transpositions in such a way that all of them can be grouped by $K$ independent transpositions:
    \begin{equation}\label{formula_permutation_decomposition_main}
        h = G_1 \circ G_2 \circ \cdots \circ G_t \circ h' \;  ,    
    \end{equation}
    where $G_i = (\vv x_{i,1}, \vv y_{i,1}) \circ \cdots \circ (\vv x_{i,K}, \vv y_{i,K})$ is an $i$-th group
    of $K$ independent transpositions, $\vv x_{i,j}, \vv y_{i,j} \in \ZZ_2^n$ and $h'$ is a residual permutation.
    Our goal is to show that a group $G_i$ can be implemented by a composition of a single multiple controls Toffoli gate
    and many CNOT and 2-CNOT gates.

    The permutation $h$ can also be represented as the product of independent cycles
    with the sum of cycles lengths no more than $2^n$. Having this permutation representation,
    we can obtain independent transpositions from the cycles as follows:
    \begin{multline}\label{formula_decompostion_of_two_cycles}
        (i_1, \cdots, i_r) \circ (j_1, \cdots, j_s) = (i_1, i_2) \circ (j_1, j_2) \circ \\
            \circ (i_1, i_3, \cdots, i_r) \circ (j_1, j_3, \cdots, j_s) \;  .
    \end{multline}
    \begin{equation}\label{formula_decompostion_of_k_cycle}
        (i_1, \cdots, i_r) = (i_1, i_2) \circ (i_3, i_4) \circ (i_1, i_3, i_5, i_6, \cdots, i_r)
            \text{\; for \;} r \geq 5 \;  .
    \end{equation}

    If we look at the permutation representation~\eqref{formula_permutation_decomposition_main}
    and the equations \eqref{formula_decompostion_of_two_cycles}--\eqref{formula_decompostion_of_k_cycle},
    we will see that $K$ independent transpositions can't be obtained from a residual permutation $h'$
    only if it has less than $K$ independent cycles and every of these cycles has the length less than 5.
    Hence, a sum of the cycles lengths of the permutation $h'$ is no more than $4(K-1)$.

    Let $M_g$ be the set of non-fixed points of a permutation $g \in S(\ZZ_2^n)$:
    $$
        M_g = \{\,\vv x \in \ZZ_2^n\mid g(\vv x) \ne \vv x\,\} \;  .
    $$
    Then we can state that $|M_h| \leq 2^n$, $|M_{h'}| \leq 4(K-1)$.

    If we apply the equation~\eqref{formula_permutation_decomposition_main} to the permutation $h'$ providing that $K = 2$,
    we will see that this permutation can be represented as a product of no more than $|M_{h'}| \mathop / 2$
    independent transpositions pairs and one cycle with the length 3 at most.
    Every cycle with the length 3 also can be represented as a product of two independent transpositions pairs:
    $$
        (i, j, k) = ((i,j) \circ (r, s)) \circ ((r,s) \circ (i,k)) \;  .
    $$

    Let $g^{(i)}$ be a permutation which is represented as a product of $i$ independent transpositions and
    $f_{g^{(i)}}$ be the transformation defined by this permutation.
    Now we can derive an upper bound for the function $L(f_h,0)$, where $f_h$ is the transformation,
    defined by a permutation $h \in A(\ZZ_2^n)$:
    \begin{gather}
        L(f_h,0) \leq \frac{|M_h|}{K} \cdot L(f_{g^{(K)}},0)
            + \left(\frac{|M_{h'}|}{2} + 2 \right ) \cdot L(f_{g^{(2)}},0) \;  , \notag \\
        L(f_h,0) \leq \frac{2^n}{K} \cdot L(f_{g^{(K)}},0) + 2K \cdot L(f_{g^{(2)}},0) \;  .
        \label{formula_upper_bound_of_L_h_common}
    \end{gather}
    All we should do now is to find out an upper bound for the function $L(f_{g^{(K)}},0)$.

    Let's consider an arbitrary permutation $g^{(K)} \in A(\ZZ_2^n)$.
    Let $k$ be the cardinality of the set $M_{g^{(K)}}$: $k = |M_{g^{(K)}}|$, then $k = 2K$.
    The essence of the proposed synthesis algorithm is in a conjugation of the permutation $g^{(K)}$
    in order to get a permutation corresponding to a single $k$-CNOT gate.
    Every gate $e$ from $\Omega_n^2$ defines a permutation $h_e$, for which $h^{-1}_e = h_e$.
    This means that conjugating $g^{(K)}$ by $h_e$ corresponds to the attaching the gate $e$
    to the front and back of a current sub-circuit.

    Let $g^{(K)} = (\vv x_1, \vv y_1) \circ \cdots \circ (\vv x_K, \vv y_K)$.
    We define a matrix $A$ as follows:
    \begin{equation}
        A =
            \left(
                \begin{matrix}
                    \vv x_1 \\
                    \vv y_1 \\
                    \cdots \\
                    \vv x_K \\
                    \vv y_K
                \end{matrix}
            \right )
          =
            \left(
                \begin{matrix}
                    a_{1,1}   & \cdots & a_{1,n}   \\
                    a_{2,1}   & \cdots & a_{2,n}   \\
                    \hdotsfor{3}                   \\
                    a_{k-1,1} & \cdots & a_{k-1,n} \\
                    a_{k,1}   & \cdots & a_{k,n}   \\
                \end{matrix}
            \right ) \;  .
        \label{formula_matrix_for_permutation}
    \end{equation}

    Let $k$ be the power of two: $2^{\lfloor \log_2 k \rfloor} = k$.
    If $k \leq \log_2 n$, then we can state that no more than $2^k$ and no less than $\log_2 k$
    pairwise distinct columns exist in the matrix $A$.
    Without the loss of generality we can assume that all $d \leq 2^k$ pairwise distinct columns are the first ones.
    Then for every $j$-th column, $j > d$, there is equal to it an $i$-th column, $i \leq d$.
    If we conjugate $g^{(K)}$ by the permutation, corresponding to the gate $C_{i;j}$,
    we will zero out a $j$-th column in the matrix $A$.
    We do this for all the columns whose index is greater than $d$ using $L_1 = 2(n-d)$ CNOT gates.%
    \phantomsection\label{page_l1}
    In result we obtain a new permutation $g_1^{(K)}$ and a new matrix $A_1$ for it as follows:
    $$
        A_1 =
            \left(\phantom{\begin{matrix} 0\\0\\0\\0\\0\\ \end{matrix}} \right.
            \begin{matrix}
                a_{1,1}   & \cdots & a_{1,d}   \\
                a_{2,1}   & \cdots & a_{2,d}   \\
                \hdotsfor{3}                     \\
                a_{k-1,1} & \cdots & a_{k-1,d} \\
                a_{k,1}   & \cdots & a_{k,d}   \\
            \end{matrix}
            \phantom{\begin{matrix} 0\\0\\0\\0\\0\\ \end{matrix}}
            \overbrace{
                \begin{matrix}
                    0 &\cdots & 0   \\
                    0 &\cdots & 0   \\
                    \hdotsfor{3}    \\
                    0 &\cdots & 0   \\
                    0 &\cdots & 0   \\
                \end{matrix}
            }^{n - d}        
            \left.\phantom{\begin{matrix} 0\\0\\0\\0\\0\\ \end{matrix}} \right) \;  .
    $$

    Now for every $a_{1,i} = 1$ we conjugate $g_1^{(K)}$ by the permutation corresponding to the gate $N_i$
    in order to zero out the first row of the matrix $A_1$. We need $L_2 = 2d$ NOT gates to do this.%
    \phantomsection\label{page_l2}
    In result we obtain a new permutation $g_2^{(K)}$ and a new matrix $A_2$ for it as follows:
    $$
        A_2 =
            \left(\phantom{\begin{matrix} 0\\0\\0\\0\\0\\ \end{matrix}} \right.
            \begin{matrix}
                0         & \cdots & 0           \\
                b_{2,1}   & \cdots & b_{2,d}   \\
                \hdotsfor{3}                     \\
                b_{k-1,1} & \cdots & b_{k-1,d} \\
                b_{k,1}   & \cdots & b_{k,d}   \\
            \end{matrix}
            \phantom{\begin{matrix} 0\\0\\0\\0\\0\\ \end{matrix}}
            \overbrace{
                \begin{matrix}
                    0 &\cdots & 0   \\
                    0 &\cdots & 0   \\
                    \hdotsfor{3}    \\
                    0 &\cdots & 0   \\
                    0 &\cdots & 0   \\
                \end{matrix}
            }^{n - d}        
            \left.\phantom{\begin{matrix} 0\\0\\0\\0\\0\\ \end{matrix}} \right) \;  .
    $$

    Next step is reducing the matrix $A_2$ to a \textit{canonical form},
    where every row, after reversing the order of its elements, will represent itself the binary expansion
    of row index minus 1.
    
    All the rows in the matrix $A_2$ are distinct. The first row is already in the canonical form,
    so we will successively transform the rest of the rows beginning from the second one.
    Let's assume that the current row has an index $i$ and all the rows with indices
    from $1$ to $(i-1)$ are in the canonical form. There are two cases:
    \begin{enumerate}
        \item
            There is nonzero element $b_{i,j}$ in the $i$-th row with an index $j > \log_2 k$.
            In this case for every element $b_{i, j'}$, $j' \ne j$, $j' < d$,
            which is not equal to the $j'$-th bit in the binary expansion of the number $(i-1)$,
            we conjugate $g_2^{(K)}$ by the permutation corresponding to the gate $C_{j;j'}$.
            This will require no more than $2d$ CNOT gates.
            To make the current row canonical, we should now zero out only the $j$-th element of it.
            It can be done with conjugating $g_2^{(K)}$ by the permutation corresponding to the gate $C_{I;j}$,
            where $I$ is the set of nonzero bits indices in the binary expansion of the number $(i-1)$.
            For example, if $i = 6$, then $I = \{\,1, 3\,\}$.
            Since $|I| \leq \log_2 k$, we can replace this multiple controls Toffoli gate by a composition of no more than
            $8 \log_2 k$ Toffoli gates~\cite{barenco}, thus we need no more than $16 \log_2 k$ Toffoli gates for this part.
            
            So, summing up, in this case we need $L_3^{(i)} \leq 2d + 16 \log_2 k$ gates from $\Omega_n^2$
            to transform the $i$-th row to the canonical form.

        \item
            There is no nonzero element in the $i$-th row with an index $j > \log_2 k$: $b_{i,j} = 0$
            for all $j > \log_2 k$.
            In this case, we conjugate $g_2^{(K)}$ by the permutation corresponding to the gate $C_{I;\log_2 k + 1}$,
            where $I$ is the set of current row nonzero elements' indices.
            Because of inequality of matrix rows and because all the previous rows are in the canonical form,
            we can state that the value of $b_{j,\log_2 k + 1}$ will be inverted only if $j \geq i$.
            Since $|I| \leq \log_2 k$, we can replace this multiple controls Toffoli gate by a composition of no more than
            $8 \log_2 k$ Toffoli gates~\cite{barenco}, thus we need no more than $16 \log_2 k$ Toffoli gates for this part.
            After that we can go to the previous case.
            
            So, summing up, in this case we need $L_3^{(i)} \leq 2d + 32\log_2 k$
            gates from $\Omega_n^2$ to transform the $i$-th row to the canonical form.
    \end{enumerate}
    
    As we can see, we obtained a new restriction to the value of $k$: $\log_2 k$ should be strictly less than $n$,
    otherwise we will not be able to transform the matrix $A_2$ to the canonical form.
    After this transforming, we obtain a new permutation $g_3^{(K)}$
    and a new matrix $A_3$ for it as follows:
    $$
        A_3 =
            \left(\phantom{\begin{matrix} 0\\0\\0\\0\\0\\ \end{matrix}} \right.
            \overbrace{
                \begin{matrix}
                    0 & 0 & 0 & \cdots & 0 \\
                    1 & 0 & 0 & \cdots & 0 \\
                    \hdotsfor{5}           \\
                    0 & 1 & 1 & \cdots & 1 \\
                    1 & 1 & 1 & \cdots & 1 \\
                \end{matrix}
            }^{\log_2 k}        
            \phantom{\begin{matrix} 0\\0\\0\\0\\0\\ \end{matrix}}
            \overbrace{
                \begin{matrix}
                    0 &\cdots & 0   \\
                    0 &\cdots & 0   \\
                    \hdotsfor{3}    \\
                    0 &\cdots & 0   \\
                    0 &\cdots & 0   \\
                \end{matrix}
            }^{n - \log_2 k}        
            \left.\phantom{\begin{matrix} 0\\0\\0\\0\\0\\ \end{matrix}} \right) \;  .
    $$
    For this transformation we need $L_3$ gates from $\Omega_n^2$\phantomsection\label{page_l3}:
    $$
        L_3 = \sum_{i=2}^k {L_3^{(i)}} \leq k(2d + 32 \log_2 k) \;  .
    $$

    Finally, for every $i > \log_2 k$ we conjugate $g_3^{(K)}$ by the permutation corresponding to the gate $N_i$.
    We need $L_4 = 2(n - \log_2 k)$ NOT gates to do this.\phantomsection\label{page_l4}
    In result we obtain a new permutation $g_4^{(K)}$ and a new matrix $A_4$ for it as follows:
    $$
        A_4 =
            \left(\phantom{\begin{matrix} 0\\0\\0\\0\\0\\ \end{matrix}} \right.
            \overbrace{
                \begin{matrix}
                    0 & 0 & 0 & \cdots & 0 \\
                    1 & 0 & 0 & \cdots & 0 \\
                    \hdotsfor{5}           \\
                    0 & 1 & 1 & \cdots & 1 \\
                    1 & 1 & 1 & \cdots & 1 \\
                \end{matrix}
            }^{\log_2 k}        
            \phantom{\begin{matrix} 0\\0\\0\\0\\0\\ \end{matrix}}
            \overbrace{
                \begin{matrix}
                    1 &\cdots & 1   \\
                    1 &\cdots & 1   \\
                    \hdotsfor{3}    \\
                    1 &\cdots & 1   \\
                    1 &\cdots & 1   \\
                \end{matrix}
            }^{n - \log_2 k}        
            \left.\phantom{\begin{matrix} 0\\0\\0\\0\\0\\ \end{matrix}} \right) \;  .
    $$

    The permutation $g_4^{(K)}$ corresponds to the single gate $C_{n,n-1, \cdots, \log_2 k + 1; 1}$. This gate has
    $(n - \log_2 k)$ control inputs, thus it can be replaced by no more than $L_5 \leq 8(n - \log_2 k)$%
    \phantomsection\label{page_l5} Toffoli gates~\cite{barenco}.

    We obtained the permutation $g_4^{(K)}$ with the help of conjugation the permutation $g^{(K)}$ by specific permutations.
    If we conjugate $g_4^{(K)}$ by exactly the same permutations, but in a reverse order,
    we will obtain $g^{(K)}$. In terms of a circuit synthesis this means that we
    should attach all the gates we used in our matrix transformations to the gate $C_{n,n-1, \cdots, \log_2 k + 1; 1}$
    from left and right, but in a reverse order. As a result we will obtain a reversible circuit $\frS_K$,
    which defines the permutation $g^{(K)}$.
    From this it follows that $L(g^{(K)},0) \leq L(\frS_K)$ and
    \begin{multline*}
        L(g^{(K)},0) \leq \sum_{i=1}^5 {L_i} \leq 2(n-d) + 2d + \\
            + k(2d + 32 \log_2 k) + 2(n-\log_2 k) + 8(n - \log_2 k) \;  ,
    \end{multline*}
    $$
        L(g^{(K)},0) \leq 12n + k2^{k+1} + 32k\log_2 k - 10 \log_2 k \; .
    $$
    Also, $L(g^{(2)},0) \leq 12n + 364$.
    
    Using these upper bounds in the equation~\eqref{formula_upper_bound_of_L_h_common},
    we obtain the following upper bound for the function $L(f_h,0)$:
    $$
        L(f_h,0) \leq \frac{2^{n+1}}{k}(12n + k2^{k+1} + 32k\log_2 k - 10 \log_2 k) + k(12n + 364) \;  .
    $$

    Our synthesis algorithm requires $k$ to be the power of two and $\log_2 k$ to be strictly less than $n$.
    Let $m = \log_2 n - \log_2 \log_2 n - \log_2 \phi(n)$ and $k = 2^{\lfloor \log_2 m \rfloor}$,
    where $\phi(n) < n \mathop / \log_2 n$ is an arbitrarily slowly growing function.
    Then $m / 2 \leq k \leq m$ and
    \begin{gather*}
        L(f_h,0) \leq \frac{2^{n+2}}{m}(12n + 2m2^m + (32 - o(1))m\log_2 m) \;  , \\
        L(f_h,0) \leq \frac{3n2^{n+4}}{m} 
            \left( 1 + \frac{2^m\log_2 n}{6n} + \left(\frac{8}{3} - o(1)\right)
            \frac{\log_2 n \cdot \log_2 \log_2 n}{n} \right) \;  .
    \end{gather*}

    From this we obtain the final upper bound for the function $L(f_h,0)$:
    \begin{equation}\label{formula_gate_complexity_of_transformation_no_memory}
        L(f_h,0) \leq \frac{3n2^{n+4}}{\log_2 n - \log_2 \log_2 n - \log_2 \phi(n)}
            \left( 1 + \epsilon(n) \right) \;  ,
    \end{equation}
    where the function $\epsilon(n)$ equals to
    $$
        \epsilon(n) = \frac{1}{6\phi(n)} +\left(\frac{8}{3} - o(1)\right)
            \frac{\log_2 n \cdot \log_2 \log_2 n}{n} \;  .
    $$

    Since our synthesis algorithm can produce a reversible circuit $\frS$ for an arbitrary permutation $h \in A(\ZZ_2^n)$,
    it follows that the function $L(n,0)$ is upper bounded by the same value as $L(f_h,0)$.
\end{proof}

To explain the main part of our synthesis algorithm, let's consider a permutation
$g^{(2)} = (\langle 1,0,0,1 \rangle, \langle 0, 0, 0, 0 \rangle) \circ (\langle 1,1,1,1 \rangle, \langle 0, 1, 1, 0 \rangle)$.
This permutation can be implemented by a reversible circuit
$\frS = C_{1;4} * C_{2;3} * N_1 * N_3 * N_4 * C_{3,4;1} * N_4 * N_3 * N_1 * C_{2;3} * C_{1;4}$.
The process of obtaining the circuit $\frS$ is showed in Fig.~\ref{pic_conjugation_process}.

\begin{figure}[ht]
    \centering
    \begin{tabular}{ccc}
        $\frS$ &
        $C_{1;4} * \frS * C_{1;4}$ &
        $C_{2;3} * \frS_1 * C_{2;3}$
        \\
        
        \\
        
        $A =
        \left(
            \begin{matrix}
                1 & 0 & 0 & 1 \\
                0 & 0 & 0 & 0 \\
                1 & 1 & 1 & 1 \\
                0 & 1 & 1 & 0 \\
            \end{matrix}
        \right ) \Rightarrow$ &
        $\left(
            \begin{matrix}
                1 & 0 & 0 & 0 \\
                0 & 0 & 0 & 0 \\
                1 & 1 & 1 & 0 \\
                0 & 1 & 1 & 0 \\
            \end{matrix}
        \right ) \Rightarrow$ &
        $\left(
            \begin{matrix}
                1 & 0 & 0 & 0 \\
                0 & 0 & 0 & 0 \\
                1 & 1 & 0 & 0 \\
                0 & 1 & 0 & 0 \\
            \end{matrix}
        \right ) \Rightarrow$
    \end{tabular}
    
    \bigskip
    \bigskip

    \begin{tabular}{ccc}
        $N_1 * \frS_2 * N_1$ &
        $N_3 * \frS_3 * N_3$ &
        $N_4 * \frS_4 * N_4$
        \\
        
        \\
        
        $\Rightarrow
        \left(
            \begin{matrix}
                0 & 0 & 0 & 0 \\
                1 & 0 & 0 & 0 \\
                0 & 1 & 0 & 0 \\
                1 & 1 & 0 & 0 \\
            \end{matrix}
        \right )$ &
        $\Rightarrow
        \left(
            \begin{matrix}
                0 & 0 & 1 & 0 \\
                1 & 0 & 1 & 0 \\
                0 & 1 & 1 & 0 \\
                1 & 1 & 1 & 0 \\
            \end{matrix}
        \right )$ &
        $\Rightarrow
        \left(
            \begin{matrix}
                0 & 0 & 1 & 1 \\
                1 & 0 & 1 & 1 \\
                0 & 1 & 1 & 1 \\
                1 & 1 & 1 & 1 \\
            \end{matrix}
        \right )$
    \end{tabular}

    \smallskip
    \bigskip

    $
        N_4 * N_3 * N_1 * C_{2;3} * C_{1;4} * \frS * C_{1;4} * C_{2;3} * N_1 * N_3 * N_4 = C_{3,4;1}
    $
    \caption
    {
        The process of obtaining a reversible circuit $\frS$, implementing a permutation
        $g^{(2)} = (\langle 1,0,0,1 \rangle, \langle 0, 0, 0, 0 \rangle)
        \circ (\langle 1,1,1,1 \rangle, \langle 0, 1, 1, 0 \rangle)$.
    }\label{pic_conjugation_process}
\end{figure}

The proposed synthesis algorithm allows us to prove the Theorem~\ref{theorem_depth_upper_no_memory}.
\begin{proof}[Proof of the Theorem~\ref{theorem_depth_upper_no_memory}]
    We should prove the following equation:
    $$
        D(n, 0) \leq \frac{n2^{n+5}}{\log_2 n - \log_2 \log_2 n - \log_2 \phi(n)}
            \left( 1 + \epsilon(n) \right) \;  ,
    $$
    where $\phi(n) < n \mathop / \log_2 n$ is an arbitrarily slowly growing function and $\epsilon(n)$ equals to
    $$
        \epsilon(n) = \frac{1}{4\phi(n)} + (4 - o(1))\frac{\log_2 n \cdot \log_2 \log_2 n}{n} \;  .
    $$
    This can be easily done, if we take into account that some of operations in the proposed synthesis algorithm can be
    done with the logarithmic depth. For example, we can zero out duplicating columns in the matrix with the logarithmic depth
    (see Fig.~\ref{pic_zeroing_out}).
    Also, a conjugation by permutations, corresponding to NOT gates, can be done with the constant depth.
    
    \begin{figure}[ht]
        \centering
        \includegraphics[scale=1.2]{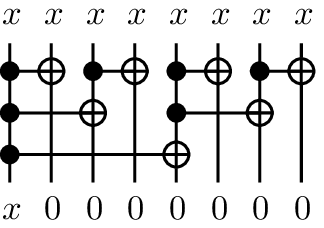}
        \caption
        {
            Clearing duplicating inputs with the logarithmic depth.
        }\label{pic_zeroing_out}
    \end{figure}

    This implies that $D_1 = 2 \lceil \log_2(n-d) \rceil$ (against $L_1 = 2(n-d)$, see page~\pageref{page_l1}),
    $D_2 = 2$ (against $L_2 = 2d$, see page~\pageref{page_l2}) and $D_4 = 2$
    (against $L_4 = 2(n-\log_2 k)$, see page~\pageref{page_l4}).
    All other parts of our synthesis algorithm produce sub-circuits with the depth equal to the gate complexity:
    $D_3 = L_3 \leq k(2d + 32 \log_2 k)$, $D_5 = L_5 \leq 8(n - \log_2 k)$
    (see page~\pageref{page_l3}).

    Using these depth values, we can derive the following upper bound:
    $$
        D(g^{(K)},0) \leq \sum_{i=1}^5 {D_i} \leq 2 \log_2 n + k(2^{k+1} + 32 \log_2 k)
            + 8(n - \log_2 k) + 6 \;  .
    $$
    Also, $D(g^{(2)},0) \leq 8n + 2\log_2 n + 374$.

    Using these upper bounds in the equation~\eqref{formula_upper_bound_of_L_h_common} for the circuit depth,
    we obtain the following upper bound for the function $D(f_h,0)$:
    $$
        D(f_h,0) \leq \frac{2^{n+1}}{k}(8n + 2\log_2 n + k2^{k+1} + (32 - o(1))k\log_2 k) \;  .
    $$

    Providing $m = \log_2 n - \log_2 \log_2 n - \log_2 \phi(n)$ and $k = 2^{\lfloor \log_2 m \rfloor}$,
    where $\phi(n) < n \mathop / \log_2 n$ is an arbitrarily slowly growing function,
    we obtain an upper bound for the function $D(f_h,0)$:
    \begin{gather*}
        D(f_h,0) \leq \frac{n2^{n+5}}{m} 
            \left( 1 + \frac{\log_2 n}{4n} + \frac{m 2^m}{4n} + (4 - o(1))
            \frac{m \log_2 m}{n} \right) \;  , \\
        D(f_h,0) \leq \frac{n2^{n+5}}{\log_2 n - \log_2 \log_2 n - \log_2 \phi(n)}
            \left( 1 + \epsilon(n) \right) \;  ,
    \end{gather*}
    where the function $\epsilon(n)$ equals to
    $$
        \epsilon(n) = \frac{1}{4\phi(n)} + (4 - o(1)) \frac{\log_2 n \cdot \log_2 \log_2 n}{n} \;  .
    $$

    Since our synthesis algorithm can produce a reversible circuit $\frS$ for an arbitrary permutation $h \in A(\ZZ_2^n)$,
    it follows that the function $D(n,0)$ is upper bounded by the same value as $D(f_h,0)$.
\end{proof}

Now we should prove the last theorem of this paper.
\begin{proof}[Proof of the Theorem~\ref{theorem_quantum_weight_upper_no_memory}]
    We should count the number of NOT, CNOT and 2-CNOT gates in a reversible circuit,
    synthesized by the proposed synthesis algorithm, to prove an upper bound of the theorem
    $$
        W(n, 0) \leqslant \frac{n2^{n+4} \left( \WC(1 + \epsilon_C(n)) + 2\WT(1 + \epsilon_T(n) \right)}
            {\log_2 n - \log_2 \log_2 n - \log_2 \phi(n)} \;  ,
    $$
    where $\phi(n) < n \mathop / \log_2 n$ is an arbitrarily slowly growing function and
    \begin{align*}
        \epsilon_C(n) &= \frac{1}{2\phi(n)} - \left( \frac{1}{2} - o(1) \right) \cdot \frac{\log_2 \log_2 n }{n} \;  , \\
        \epsilon_T(n) &= (4 - o(1))\frac{\log_2 n \cdot \log_2 \log_2 n}{n} \;  .
    \end{align*}
    
    We can see that
    \begin{align*}
        \LC_1 & = 2(n-d)  \; \text{(CNOT gates only)} \; ,        & \LT_1 & = 0 \; ; \\
        \LC_2 & = 2d  \; \text{(NOT gates only)} \; ,             & \LT_2 & = 0 \; ; \\
        \LC_3 & \leq 2kd  \; \text{(CNOT gates only)} \; ,        & \LT_3 & \leq 32k \log_2 k \; ; \\
        \LC_4 & = 2(n - \log_2 k)  \; \text{(NOT gates only)} \; ,& \LT_4 & = 0 \; ; \\
        \LC_5 & = 0 \; ,                                          & \LT_5 & \leq 8(n - \log_2 k) \; .
    \end{align*}
    
    Summing up, we obtain the following upper bounds:
    \begin{gather*}
        \LC(g^{(K)},0) \leq \sum_{i=1}^5 {\LC_i} \leq 2(n-d) + 2d + 2kd + 2(n - \log_2 k) \; ,\\
        \LT(g^{(K)},0) \leq \sum_{i=1}^5 {\LT_i} \leq 32k \log_2 k + 8(n - \log_2 k) \;  , \\
        \LC(g^{(K)},0) \leq 4n + k 2^{k+1} - 2 \log_2 k \; , \\
            \LT(g^{(K)},0) \leq 8n + 32k \log_2 k - 8 \log_2 k \;  .
    \end{gather*}
    Also, $\LC(g^{(2)},0) \leq 4n + 124$ and $\LT(g^{(2)},0) \leq 8n + 240$.

    Using the equation~\eqref{formula_upper_bound_of_L_h_common},
    we obtain the following upper bounds for the functions $\LC(f_h,0)$ and $\LT(f_h,0)$:
    \begin{gather*}
        \LC(f_h,0) \leq \frac{2^{n+1}}{k} (4n + k 2^{k+1} - 2 \log_2 k) + k(4n + 124) \;  ,\\
        \LT(f_h,0) \leq \frac{2^{n+1}}{k} (8n + 32k \log_2 k - 8 \log_2 k) + k(8n + 240) \;  .
    \end{gather*}

    Providing $m = \log_2 n - \log_2 \log_2 n - \log_2 \phi(n)$ and $k = 2^{\lfloor \log_2 m \rfloor}$,
    where $\phi(n) < n \mathop / \log_2 n$ is an arbitrarily slowly growing function,
    we obtain the following upper bounds for the functions $\LC(f_h,0)$ and $\LT(f_h,0)$:
    \begin{gather*}
        \LC(f_h,0) \leq \frac{n2^{n+4}}{\log_2 n - \log_2 \log_2 n - \log_2 \phi(n)}
            \left( 1 + \epsilon_C(n) \right) \;  ,\\
        \LT(f_h,0) \leq \frac{n2^{n+5}}{\log_2 n - \log_2 \log_2 n - \log_2 \phi(n)}
            \left( 1 + \epsilon_T(n) \right) \;  ,
    \end{gather*}
    where the functions $\epsilon_C(n)$ and $\epsilon_T(n)$ equal to
    \begin{align*}
        \epsilon_C(n) &= \frac{1}{2\phi(n)} - \left( \frac{1}{2} - o(1) \right) \cdot \frac{\log_2 \log_2 n }{n} \;  , \\
        \epsilon_T(n) &= (4 - o(1))\frac{\log_2 n \cdot \log_2 \log_2 n}{n} \;  .
    \end{align*}
        
    Since our synthesis algorithm can produce a reversible circuit $\frS$ for an arbitrary permutation $h \in A(\ZZ_2^n)$,
    it follows that the function $W(n,0)$ is upper bounded by the same value as $W(f_h,0)$.
    From upper bounds for the functions $\LC(f_h,0)$ and $\LT(f_h,0)$ and
    from the equation~\eqref{formula_quantum_weight_and_gate_complexity_dependency},
    an upper bound for the function $W(n,0)$ from the Theorem~\ref{theorem_quantum_weight_upper_no_memory} follows.
\end{proof}

From the proof it follows that the ratio of the numbers of gates NOT, CNOT and Toffoli in a synthesized circuit
is approximately equal to 1:1:4.

%%%%%%%%%%%%%%%%%%%%%%%%%%%%%%%%%%%%%%%%%%%%%%%%%%%%%%%%%%%%%%%%%%%
\section{Conclusion}

We have discussed the problem of general synthesis of a reversible circuit without additional inputs,
consisting of NOT, CNOT and 2-CNOT gates, with the lowest possible gate complexity and depth.
We have studied the Shannon gate complexity function $L(n,q)$, the depth function $D(n,q)$ and
the quantum weight function $W(n,q)$ for a reversible circuit, implementing a transformation $f\colon \ZZ_2^n \to \ZZ_2^n$
from the set $F(n,q)$ without additional inputs.

From the lower bounds of these function we can see that using additional inputs should reduce 
the circuit's gate complexity and the depth. This is in line with respective practical evaluations as e.g. conducted in%
~\cite{miller_reducing_complexity, abdessaied_reducing_depth}.
Also, in paper~\cite{my_complexity_with_memory} an upper asymptotic bound $2^n$ for the function $L(n,q)$
in case of using additional inputs was established.
This bound is asymptotically lower than our bound for $L(n,0)$, but a significant number of additional inputs
in a reversible circuit is required to achieve it.

When solving the problem of reversible logic synthesis one should find a compromise between the gate complexity,
the depth (working time) and the amount of used memory (additional inputs) of a reversible circuit.
Further research should establish a more precise relationship of these parameters from each other.

%%%%%%%%%%%%%%%%%%%%%%%%%%%%%%%%%%%%%%%%%%%%%%%%%%%%%%%%%%%%%%%%%%%

\end{document}